\newtheorem{thm}{Theorem}
\renewcommand{\qed}{\hfill$\blacksquare$\\}
\renewenvironment{proof}{\begin{addmargin}[1em]{0em}\begin{newproof} }{\end{newproof}\qed\end{addmargin}}
\numberwithin{equation}{section}
\newtheoremstyle{lemmadd}{\topsep}{\topsep}{\mdseries}{0pt}{\bfseries}{. }{ }{\thmname{#1}\thmnumber{ #2}\textnormal{\thmnote{ (#3)}}}
\theoremstyle{lemmadd}
\newtheoremstyle{propadd}{\topsep}{\topsep}{\mdseries}{0pt}{\bfseries}{. }{ }{\thmname{#1}\thmnumber{ #2}\textnormal{\thmnote{ (#3)}}}
\theoremstyle{propadd}
\newtheoremstyle{theoremdd}{\topsep}{\topsep}{\mdseries}{0pt}{\bfseries}{. }{ }{\thmname{#1}\thmnumber{ #2}\textnormal{\thmnote{ (#3)}}}
\theoremstyle{theoremdd}
\newtheoremstyle{definition}{\topsep}{\topsep}{\mdseries}{0pt}{\bfseries}{. }{ }{\thmname{#1}\thmnumber{ #2}\textnormal{\thmnote{ (#3)}}}
\theoremstyle{definition}
\newtheoremstyle{propositiondd}{\topsep}{\topsep}{\mdseries}{0pt}{\bfseries}{. }{ }{\thmname{#1}\thmnumber{ #2}\textnormal{\thmnote{ (#3)}}}
\theoremstyle{propositiondd}
\newtheoremstyle{corollarydd}{\topsep}{\topsep}{\mdseries}{0pt}{\bfseries}{. }{ }{\thmname{#1}\thmnumber{ #2}\textnormal{\thmnote{ (#3)}}}
\theoremstyle{corollarydd}
\newtheoremstyle{example}{\topsep}{\topsep}{\mdseries}{0pt}{\bfseries}{. }{ }{\thmname{#1}\thmnumber{ #2}\textnormal{\thmnote{ (#3)}}}
\theoremstyle{example}
\title{Option Pricing with Stochastic Volatility, Equity Premium, and Interest Rates}
\author{Nicole Hao$^1$, Echo Li$^2$, Diep Luong-Le$^3$}
\date{\today %
  %  \textit{$^1$Cornell University \quad $^2$Ohio State University \quad $^3$Lehigh University} \\[2ex]%
    }
    \thanks{1. Department
 of Mathematics,  Cornell University,
Ithaca, NY 14853, USA, (yh397@cornell.edu)}
    \thanks{2. Department
 of Mathematics,  Ohio State University,
Columbus, OH 43210, USA, (li.11789@buckeyemail.osu.edu)}
    \thanks{3. Department
 of Mathematics,  Lehigh University,
Bethlehem, PA 18015, USA, (dll224@lehigh.edu)}
\keywords{Option pricing, European call option, up-and-out call option, fixed-strike Asian call options, initial-boundary value problems, stochastic volatility, stochastic interest rates, finite difference methods. }
\subjclass[2020]{91G80, 35Q91}
\begin{document}
\maketitle

\begin{abstract}
\noindent 
This paper presents a new model for options pricing. The Black-Scholes-Merton (BSM) model plays an important role in financial options pricing. However, the BSM model assumes that the risk-free interest rate, volatility, and equity premium are constant, which is unrealistic in the real market. To address this, our paper considers the time-varying characteristics of those parameters. Our model integrates elements of the BSM model, the Heston (1993) model  for stochastic variance, the Vasicek model (1977) for stochastic interest rates, and the Campbell and Viceira model (1999, 2001) for stochastic equity premium. We derive a linear second-order parabolic PDE and extend our model to encompass fixed-strike Asian options, yielding a new PDE. In the absence of closed-form solutions for any options from our new model, we utilize finite difference methods to approximate prices for European call and up-and-out barrier options, and outline the numerical implementation for fixed-strike Asian call options.  
\end{abstract}

\section{Introduction} \label{intro}
\quad In this paper, we introduce a comprehensive model to price options expanding upon the Black-Scholes \cite{BS,M} model, by integrating the Heston model  \cite{heston_closed-form_1993} for a time-varying variance on the stock, the Vasicek \cite{Vasicek} model for a time-varying interest rate, and the Campbell-Viceira model \cite{CV1, CV2}  for a time-varying equity premium. 
In particular, we consider the following system of stochastic differential equations (SDEs)
\begin{align} \label{mainmodel}
    \begin{cases}
        dS(t)  = (\mu + X(t) +R(t))S(t)dt + \sqrt{\sigma_s(t)}S(t)dW_1(t)\\
        dX(t) = -\kappa_xX(t)dt + \sigma_x ( \rho_xdW_1(t)+\sqrt{1-\rho_x^2}  dW_2(t))\\
        d\sigma_s(t) = \kappa_s(\sigma - \sigma_s(t))dt + \eta \sqrt{\sigma_s(t)}(\rho_sdW_1(t)+\sqrt{1-\rho_s^2}dW_3(t))\\
        dR(t) = \kappa_r (r-R(t))dt + \sigma_r (\rho_r dW_1(t)+\sqrt{1-\rho_r^2}dW_4(t)).\\
    \end{cases}
\end{align}
\quad The random variables $S(t), X(t), \sigma_s(t),$ and $R(t)$ represents the underlying asset price, the deviation in the equity premium from its mean, the volatility, and the risk-free interest rate at time $t$. The parameters $\mu, \sigma_x, \sigma, \eta,$ and $\sigma_r$ are the long term average equity premium on the stock, the volatility of the equity premium, the long term average volatility of the stock, the volatility of the volatility, and the volatility of the interest rate. The $W_1, W_2, W_3,$ and $W_4$ are independent Brownian motions on a probability space $(\Omega, \mathcal F, \mathbb P)$ adapted to a filtration $\mathcal F_t$. The $\rho_x, \rho_s$, and $\rho_r$ are the correlation between the stock price and the change in equity premium, between the stock price and the volatility, and between the stock price and the interest rate; this formation assumes that these processes are correlated only through the stock price process itself, which may not be unreasonable since they are not directly observable through market data. The parameters $\kappa_x, \kappa_s$, and $\kappa_r$ correspond to the pressure for the equity premium,  volatility, and  interest rate to return to their long term average. 

We also extend our model from vanila options to price fixed-strike Asian call options. Since the payoff of Asian options depends on the running average of the underlying asset, we need to add a time-varying variable to the model in \ref{mainmodel}. We let $I$ denote the sum of underlying asset price over the time period from starting time, $T_0$,  to time $t$ 
\begin{equation} \label{Asianeq}
   I(t) = \int_{T_0}^t S(\tau) d\tau \quad \Longleftrightarrow \quad   dI(t) = S(t) dt.
\end{equation}
\quad To formulate the market model, we make foundational assumptions to ensure the feasibility.
\begin{itemize}
    \item The market is arbitrage-free. For two assets $P$ and $V$, if $P(T) = V(T)$, then $P(t) = V(t) \text{ } \forall \text{ } 0\leq t \leq T$.
    \item All processes $S$, $\sigma_s$, $X$, and $R$ are \textit{pricing processes} (see appendix for the definition of pricing processes). 
    \item There are no transaction costs when trading assets.
    \item The market is perfectly liquid. Traders are allowed to purchase or sell any amount of stock at any given time.
\end{itemize}

While there are studies extending Heston model, such as Grzelak and Oosterlee (2010) \cite{grzelak_cross-currency_2010} on the Heston model with stochastic interest rates, there is no literature integrating all of the factors we consider. The model \ref{mainmodel} extends the Heston stochastic volatility model (Heston, 1993) \cite{heston_closed-form_1993} by incorporating stochastic change in equity premium from Campbell and Viceira (1999, 2002) and interest rate from Vasicek (1977). 
This enhancement addresses the limitations associated with assuming constant values for these parameters. The change in equity premium, and the interest rate follow Ornstein–Uhlenbeck process since because we allow these processes to take negative values. We choose the Vasicek model over the Cox-Ingersoll-Ross model because we wish to allow for negative interest rates. Moreover, previous authors are interested in finding closed form solutions; we are interested in efficient numerical algorithms for estimating the value of the option by deriving a partial differential equation (PDE) the value satisfies, and estimating the solution after imposing boundary data. 

Our model assumes an incomplete market, since we assume only the stock and the risk-free asset are tradeable. This issue poses challenges when deriving a PDE and formulating an initial value boundary problem. We resolve the issue by treating all pricing processes, including $\sigma_s$, $X$, and $R$, as tradeable assets, effectively completing the market.

To showcase the practical utility of our formulated model, we estimate solutions to our PDEs by imposing terminal and boundary conditions for both European call and knock-out barrier options. We implement three numerical methods: forward Euler, backward Euler, and the Crank-Nicolson schemes. Subsequently, we compare numerical results yielded by these schemes and provide evidence of their convergence. 
The paper's structure is outlined as follows: In Section \ref{result}, we introduce and provide the derivation of new PDEs for our model.
%and present the proposition and proof of the \textit{Fundamental theorem of heageability}. 
Section \ref{numerical} presents boundary conditions for European call option, up-and-out barrier option, and Asian fixed-strike option along with numerical estimates for European call and up-and-out barrier options. In Section 4, we summarize key findings and their implications.

\section{Main results} \label{result} 

In this section we derive new PDEs for option pricing using two approaches:  replicating portfolio theory and  risk-neutral pricing. 

Let $V$ denote the price of an option on a stock, $S$, modeled by the system  \ref{mainmodel}. 
We denote $V_s$, $ V_{\sigma_s}$, and $V_x$ the first-order derivative of $V$ with respect to the underlying asset price $S$, variance of the underlying asset $\sigma_s$, and the change in equity premium $X$. Let $V_{S \sigma_s}, V_{SR}, V_{R\sigma_s}, V_{X\sigma_s}, V_{XS},$ and $V_{XR}$ be the mixed second-order partial derivatives. 
Our first main result concerning the price of the options as summarized in the following theorem.
\begin{thm} \label{ourPDE}
The price of a European style derivative on a stock price process determined by system \ref{mainmodel} must be a solution to the following PDE  
\begin{align} \begin{split}  \label{MainPDE}
    \quad \ V_t &= R(V-SV_s-XV_x-\sigma_sV_{\sigma_s}-RV_r)
    - \frac{1}{2}\sigma_sS^2 V_{ss} - \frac{1}{2}\sigma_x^2 V_{xx} - \frac{1}{2}\eta^2\sigma_s V_{\sigma_s\sigma_s}- \frac{1}{2}\sigma_r^2V_{rr}\\
    &\quad - \rho_x\sigma_x\sqrt{\sigma_s}SV_{sx} - \rho_s\eta\sigma_sSV_{s\sigma_s} - \rho_r\sigma_r\sqrt{\sigma_s}SV_{sr} \\&\quad - \rho_x\rho_s\sigma_x\eta\sqrt{\sigma_s}V_{x\sigma_s} - \rho_x\rho_r\sigma_x\sigma_rV_{xr} - \rho_s\rho_r\eta\sigma_r\sqrt{\sigma_s}V_{\sigma_sr}.
\end{split}
\end{align}
\end{thm}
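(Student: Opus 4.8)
The plan is to derive \eqref{MainPDE} by the replicating-portfolio (delta-hedging) route, which the risk-neutral computation will then mirror. First I would regard the option value as a smooth function $V=V(t,S,X,\sigma_s,R)$ of time and the four state variables and apply the multidimensional It\^o formula along \eqref{mainmodel}. The only nontrivial input is the instantaneous covariation matrix: since $W_1,W_2,W_3,W_4$ are independent, the processes $S,X,\sigma_s,R$ are mutually correlated \emph{only} through their shared $W_1$-component, so each cross-variation is exactly the product of the two $\rho$-weighted $W_1$ coefficients. Concretely, $d\langle S\rangle=\sigma_sS^2\,dt$, $d\langle X\rangle=\sigma_x^2\,dt$, $d\langle\sigma_s\rangle=\eta^2\sigma_s\,dt$, $d\langle R\rangle=\sigma_r^2\,dt$, together with $d\langle S,X\rangle=\rho_x\sigma_x\sqrt{\sigma_s}S\,dt$, $d\langle S,\sigma_s\rangle=\rho_s\eta\sigma_sS\,dt$, $d\langle S,R\rangle=\rho_r\sigma_r\sqrt{\sigma_s}S\,dt$, and analogously for $\langle X,\sigma_s\rangle$, $\langle X,R\rangle$, $\langle\sigma_s,R\rangle$. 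Substituting these into It\^o's expansion produces precisely the ten second-order terms on the right-hand side of \eqref{MainPDE} (with the factor $\tfrac12$ on the pure second derivatives and none on the mixed ones), plus a drift collecting $V_t$ and the four first-order physical drifts, plus a martingale part built from the four Brownian increments.

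Next I would exploit the market-completion assumption, treating $S$, $X$, $\sigma_s$, and $R$ all as tradeable pricing processes, and form the self-financing portfolio that is long one option and short $\Delta_S,\Delta_X,\Delta_\sigma,\Delta_R$ units of $S,X,\sigma_s,R$ respectively. Matching the coefficients of $dW_2,dW_3,dW_4$ (each of which enters through a single state variable) forces $\Delta_X=V_x$, $\Delta_\sigma=V_{\sigma_s}$, $\Delta_R=V_r$; the remaining $dW_1$-coefficient then collapses to the single requirement $\Delta_S=V_s$. With these hedge ratios every stochastic increment cancels, and, because each $\Delta$ equals the matching first partial of $V$, the first-order physical drifts cancel as well, leaving $d\Pi=\bigl(V_t+\tfrac12(\text{diffusion terms})\bigr)dt$ for the hedged portfolio $\Pi=V-V_sS-V_xX-V_{\sigma_s}\sigma_s-V_rR$.

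Finally, invoking the no-arbitrage hypothesis, the instantaneously riskless portfolio must grow at the short rate, $d\Pi=R\Pi\,dt$. Equating the two expressions for $d\Pi$ and substituting $\Pi=V-SV_s-XV_x-\sigma_sV_{\sigma_s}-RV_r$ yields \eqref{MainPDE} after moving the diffusion terms to the right. Equivalently, the risk-neutral version changes measure via Girsanov with the market price of risk pinned down by the four tradeable assets, so that the $\mathbb{Q}$-drifts of $S,X,\sigma_s,R$ become $RS,RX,R\sigma_s,R^2$ and the discounted price $e^{-\int_0^t R\,du}V$ is a $\mathbb{Q}$-martingale; setting its drift to zero gives the same PDE, which is a useful internal consistency check on the first-order coefficients.

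The step I expect to be the genuine obstacle is conceptual rather than computational: justifying that the incomplete market (only $S$ and the bond are truly traded) can be completed by promoting $X$, $\sigma_s$, and $R$ to tradeable pricing processes, and that the resulting hedged position is legitimately riskless so that the growth condition $d\Pi=R\Pi\,dt$ applies. This is exactly where the \emph{pricing process} definition and the no-arbitrage axiom from the standing assumptions do the work; once that is granted, the It\^o bookkeeping and the drift cancellation are routine, the only care being the correct signs and $\tfrac12$-conventions on the ten covariation terms.
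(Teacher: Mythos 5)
Your proposal is correct and follows essentially the same route as the paper: both complete the market by treating $X$, $\sigma_s$, and $R$ as tradeable pricing processes, apply It\^o's formula with the same covariation structure (cross-variations arising only through the shared $W_1$ component), determine the hedge ratios $\Delta_X=V_x$, $\Delta_\sigma=V_{\sigma_s}$, $\Delta_R=V_r$, $\Delta_S=V_s$ by matching Brownian coefficients, and close the argument with the no-arbitrage growth condition, with the Girsanov/risk-neutral computation (under which the drifts become $RS$, $RX$, $R\sigma_s$, $R^2$) serving as the second derivation. The only cosmetic difference is that you hedge the option with a short position ($\Pi = V - \sum\Delta\cdot\text{asset}$) while the paper builds a long replicating portfolio $P$ with $P(T)=V(T)$; these are equivalent formulations of the same delta-hedging argument.
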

\begin{proof}
We use two approaches to derive the PDE: replicating portfolio and change of measure. For the first approach, we will treat our market as a complete market. In other words, all assets, including stock, change in equity premium, volatility, and interest rate, are tradeable.

    \noindent {\textbf{\\Replicating portfolio approach:}}
    We consider a mathematical economy consisting of a stock and a bond. Investors may construct a portfolio, which consists of investments in these two vehicles. We assume the bond pays the risk-free interest rate, $r$, so that any investment, $B_0$, in the bond grows according to \[B(t)=B_0e^{rt} \iff dB = rBdt, \ B(0) = B_0. \]
    
    We assume the assets follow the model \ref{mainmodel}. A portfolio denote by $P$ begins with $P(0)$ dollars at time $t=0$. The agent may form a portfolio of consisting of a bond, $\Delta_s$ shares of stock at time $t$ for a cost of $\Delta_sS$ dollars, $\Delta_{\sigma_s}$ shares of variance at time $t$ for a cost of $\Delta_{\sigma_s}\sigma_s$ dollars, $\Delta_x$ shares of equity premium at time $t$ for a cost of $\Delta_xX$, and $\Delta_r$ shares of interest rate at time $t$ for a cost of $\Delta_rR$. $\Delta_s, \Delta_{\sigma_s}, \Delta_x,$ and $\Delta_r$ may be any adapted stochastic process. 
    
    The remainder of the money in the portfolio will be invested in the bond at the risk-free rate. Thus, the change in the value of the portfolio is
\begin{align} \label{dP}
    dP&=R(P-\Delta_sS-\Delta_xX-\Delta_{\sigma_s}\sigma_s-\Delta_rR)dt+\Delta_sdS + \Delta_{\sigma_s}d\sigma_s + \Delta_xdX + \Delta_rdR
    % &=R(P-\Delta_sS-\Delta_{\sigma_s}\sigma_s-\Delta_xX-\Delta_rR)dt+\Delta_s((\mu + X +r)Sdt + \sqrt{\sigma_s}SdW_1) +\Delta_x(-k_{X}Xdt + \sigma_x ( \rho_xdW_1+\sqrt{1-\rho_x^2}  dW_2))  \\
    % &\quad + \Delta_{\sigma_s}(k_{s}(\sigma - \sigma_s)dt + \eta \sqrt{\sigma_s}(\rho_sdW_1+\sqrt{1-\rho_s^2}dW_3)) + \Delta_r(\kappa_r (r-R)dt + \xi (\rho_r dW_1+\sqrt{1-\rho_r^2}dW_4))
\end{align}

Next, we introduce a derivative security, its value at time $t$ will be denoted $V(t).$ We will assume that the contract of the derivative specifies that it can be exercised at time $T>0,$ and that the value of the derivative at time $T$ depends only upon $S(T),$ $V(T)=f(S(T)).$ Our goal is to find the value, or the price of $V(t)$ for $T>t\geq 0$. 

Now, using the no arbitrage principle, we will construct a replicating portfolio. In particular, we will find a portfolio which satisfies $P(T) = V (T)$, and therefore, the amount in the portfolio at any earlier time, $P(t)$, must be the value of the derivative, $V (t)$. That $P(t) = V(t)$ for all $0\leq t\leq T$ is equivalent to $dP = dV$ and $P(T) = V(T)$. 
Using Ito's lemma, we have
\begin{align}\begin{split} \label{dV}
    dV(t, S, X, \sigma_s, R) &= V_tdt + V_sdS+ V_xdX + V_{\sigma_s}d\sigma_s + V_rdR \\
    &\quad + \frac{1}{2}V_{ss}dSdS+\frac{1}{2}V_{xx}dXdX + \frac{1}{2}V_{\sigma_s\sigma_s}d\sigma_sd\sigma_s + \frac{1}{2}V_{rr}dRdR\\
    &\quad + V_{sx}dSdX + V_{s\sigma_s}dSd\sigma_s + V_{sr}dSdR + V_{x\sigma_s}dXd\sigma_s + V_{xr}dXdR + V_{\sigma_sr}d\sigma_sdR.
\end{split}
\end{align}

Notice that only $dX, d\sigma_s,$ and $dR$ terms in both $dP$ and $dV$ have $dW_2, dW_3$, and $dW_4$ respectively; and $dW_1$ term appears in $dS, dX, d\sigma_s,$ and $dR$. Therefore, setting $dP = dV$, we find that from $dW_1, dW_2, dW_3,$ and $dW_4$ that
\begin{align*}
        \Delta_s + \Delta_{\sigma_s}+ \Delta_x + \Delta_r &= V_s + V_x + V_{\sigma_s} + V_r &&\quad \text{ (from } dW_1  \text{ term), }\\
        \Delta_x &= V_x &&\quad \text{ (from } dW_2  \text{ term), }\\
        \Delta_{\sigma_s} &= V_{\sigma_s} &&\quad \text{ (from } dW_3  \text{ term), }\\
        \Delta_r &= V_r &&\quad \text{ (from } dW_4  \text{ term). }
\end{align*}

Thus, we also get $ \Delta_s = V_s$. Plugging $ \Delta_s = V_s$, $ \Delta_x = V_x$, $ \Delta_{\sigma_s} = V_{\sigma_s}$, and $ \Delta_r = V_r$ into equation \ref{dP}, we get
\begin{align} \label{dP2} 
    dP&=R(P-V_sS-V_xX-V_{\sigma_s}\sigma_s-V_rR)dt+V_sdS + V_{\sigma_s}d\sigma_s + V_xdX + V_rdR .
    % &=R(P-\Delta_sS-\Delta_{\sigma_s}\sigma_s-\Delta_xX-\Delta_rR)dt+\Delta_s((\mu + X +r)Sdt + \sqrt{\sigma_s}SdW_1) +\Delta_x(-k_{X}Xdt + \sigma_x ( \rho_xdW_1+\sqrt{1-\rho_x^2}  dW_2))  \\
    % &\quad + \Delta_{\sigma_s}(k_{s}(\sigma - \sigma_s)dt + \eta \sqrt{\sigma_s}(\rho_sdW_1+\sqrt{1-\rho_s^2}dW_3)) + \Delta_r(\kappa_r (r-R)dt + \xi (\rho_r dW_1+\sqrt{1-\rho_r^2}dW_4))
\end{align}

Equating $dP=dV$ from \ref{dV} and and \ref{dP2} and plugging $P = V$, we get
\begin{align*}
    & \ R(V-V_sS-V_xX-V_{\sigma_s}\sigma_s-V_rR)dt+V_sdS + V_{\sigma_s}d\sigma_s + V_xdX + V_rdR \\
    &= V_tdt + V_sdS+ V_xdX + V_{\sigma_s}\sigma_s + V_rR + \frac{1}{2}V_{ss}dSdS+\frac{1}{2}V_{xx}dXdX + \frac{1}{2}V_{\sigma_s\sigma_s}d\sigma_sd\sigma_s + \frac{1}{2}V_{rr}dRdR\\
    &\quad + V_{sx}dSdX + V_{s\sigma_s}dSd\sigma_s + V_{sr}dSdR + V_{x\sigma_s}dXd\sigma_s + V_{xr}dXdR + V_{\sigma_sr}d\sigma_sdR,
\end{align*}
which gives 
\begin{align*} \begin{split} 
      \ R(V-V_sS-V_xX- & V_{\sigma_s}\sigma_s-V_rR)dt 
    = V_tdt + \frac{1}{2}V_{ss}dSdS+\frac{1}{2}V_{xx}dXdX + \frac{1}{2}V_{\sigma_s\sigma_s}d\sigma_sd\sigma_s + \frac{1}{2}V_{rr}dRdR\\
    & \quad + V_{sx}dSdX + V_{s\sigma_s}dSd\sigma_s + V_{sr}dSdR + V_{x\sigma_s}dXd\sigma_s + V_{xr}dXdR + V_{\sigma_sr}d\sigma_sdR \\
    &= V_tdt + \frac{1}{2}V_{ss}\sigma_sS^2dt+\frac{1}{2}V_{xx}\sigma_x^2dt + \frac{1}{2}V_{\sigma_s\sigma_s}\eta^2\sigma_sdt + \frac{1}{2}V_{rr}\sigma_r^2dt\\
    &\quad + V_{sx}\rho_x\sigma_x\sqrt{\sigma_s}Sdt + V_{s\sigma_s}\rho_s\eta\sigma_sSdt + V_{sr}\rho_r\sigma_r\sqrt{\sigma_s}Sdt 
    \\ & \quad
    + V_{x\sigma_s}\rho_x\rho_s\sigma_x\eta\sqrt{\sigma_s}dt + V_{xr}\rho_x\rho_r\sigma_x\sigma_rdt + V_{\sigma_sr}\rho_s\rho_r\eta\sigma_r\sqrt{\sigma_s}dt.
    \end{split}
\end{align*}

Equating $dt$ term and rearranging terms, we get 
\begin{align*} \begin{split} 
    \quad \ V_t &= R(V-SV_s-XV_x-\sigma_sV_{\sigma_s}-RV_r)
    - \frac{1}{2}\sigma_sS^2 V_{ss} - \frac{1}{2}\sigma_x^2 V_{xx} - \frac{1}{2}\eta^2\sigma_s V_{\sigma_s\sigma_s}- \frac{1}{2}\sigma_r^2V_{rr}- \rho_x\sigma_x\sqrt{\sigma_s}SV_{sx} \\
    & \quad - \rho_s\eta\sigma_sSV_{s\sigma_s} - \rho_r\sigma_r\sqrt{\sigma_s}SV_{sr} - \rho_x\rho_s\sigma_x\eta\sqrt{\sigma_s}V_{x\sigma_s} - \rho_x\rho_r\sigma_x\sigma_rV_{xr} - \rho_s\rho_r\eta\sigma_r\sqrt{\sigma_s}V_{\sigma_sr}, 
\end{split}
\end{align*}
which is the PDE \ref{MainPDE}. 
Next we show that the same PDE can be found using the risk-neutral pricing formula.

 \noindent {\textbf{Risk-neutral approach:}}
% Let out portfolio process be defined as 
% $$dP = r(P-\Delta_1 S - \Delta_2 \sigma_s) dt + \Delta_1 dS + \Delta_2 d\sigma_s + \Delta_3 dX.$$
We define the discount process $$D(t) = e^{-\int R(t)dt} \quad \Longleftrightarrow \quad dD(t) = - R(t)D(t) dt.$$
% So, the differential form of the discounted stock process is  
% \begin{equation}
% dP = r(P-\Delta_1 S - \Delta_2 \sigma_s) dt + \Delta_1 dS + \Delta_2 d\sigma_s + \Delta_3 dX.
% \end{equation}
%
We apply Ito's lemma to the discounted stock price process to find 
\begin{align*}
    \begin{split}
        d(DS) &= D dS + S dD + dDdS\\
        &= D ( (\mu + X +R)Sdt + \sqrt{\sigma_s}SdW_1) + S(- RDdt) + (- RDdt)( (\mu + X +RSdt + \sqrt{\sigma_s}SdW_1) \\
         &= D( (\mu + X +R)Sdt + \sqrt{\sigma_s}SdW_1) - S RDdt \\
        &= DS(\mu+X)dt +DS \sqrt{\sigma_s}dW_1\\
        &= DS\sqrt{\sigma_s}\left(\frac{(\mu+X)}{\sqrt{\sigma_s}}dt + dW_1\right)
    \end{split}
\end{align*}
Similarly, we make the choice that  
\begin{align*}
    d \widetilde{W}_2 &= \left( \frac{(-\kappa_x - R)X}{\sigma_x}- \frac{\rho_x (\mu + X)} {\sqrt{1-\rho_x^2}\sqrt{\sigma_s}}    \right)dt +dW_2, \\
   d\widetilde{W}_3 &= \left(\frac{\kappa_s(\sigma - \sigma_s) - R\sigma_s}   {\eta \sqrt{\sigma_s} \sqrt{1-\rho_s}} -  \frac{\rho_s (\mu + X)}{\sqrt{1-\rho_s^2} \sqrt{\sigma_s}} \right) dt + dW_3, \\
   d\widetilde{W}_4 &=   \left(\frac{(\kappa_r (r-R) - R^2)}{\sigma_r \sqrt{1-\rho_r^2}} - \frac{\rho_r (\mu + X)}{\sqrt{1-\rho_r^2} \sqrt{\sigma_s}} \right)dt + dW_3,
\end{align*}
Note: we did not need to make this choice, in fact, another common choice is that $\widetilde W_2 = W_2$, $\widetilde W_3 = W_3$ and $\widetilde W_4 = W_4$. 

We then apply Girsanov's theorem. Under the new measure $\widetilde{\mathbb{P}}$, $d\widetilde{W_1} =\frac{(\mu+X)}{\sqrt{\sigma_s}}dt + dW_1$ is the differential of a Brownian motion. Thus, 
    $$d(DS) = DS\sqrt{\sigma_s}d\widetilde{W_1}.$$
    This implies the discounted stock process $DS$ is an Ito integral, and thus, a martingale under $\widetilde{\mathbb{P}}.$ 
 
Under our choice, we also find that the discounted change in equity premium price process $d(DX)$, the discounted variance price process $d(D\sigma_s)$, and the discounted interest rate price $d(DR)$ are martingales under new measure $\widetilde{\mathbb{P}}.$
Then, we have
\begin{align} \label{newmeasure}
    \begin{cases}
        dS= S \sqrt{\sigma_s} d \widetilde{W}_1 +RS dt\\
        dX = \sigma_x(\rho_x d\widetilde{W}_1 + \sqrt{1- \rho_x^2} d\widetilde{W}_2)+ RX dt\\
        d\sigma_s =  \eta \sqrt{\sigma_s} (\rho_s d\widetilde{W}_1 + \sqrt{1- \rho_s^2}d\widetilde{W}_3) + R \sigma_s dt\\
        dR = \sigma_r (\rho_r d\widetilde{W}_1 + \sqrt{1-\rho_r^2}d\widetilde{W}_4)+ R^2 dt
    \end{cases}
\end{align}
Applying Ito's lemma to $DV$ and setting the $dt$ term equal to 0, we get 
\begin{align*} \begin{split} 
    \quad \ V_t &= R(V-SV_s-XV_x-\sigma_sV_{\sigma_s}-RV_r)
    - \frac{1}{2}\sigma_sS^2 V_{ss} - \frac{1}{2}\sigma_x^2 V_{xx} - \frac{1}{2}\eta^2\sigma_s V_{\sigma_s\sigma_s}- \frac{1}{2}\sigma_r^2V_{rr}- \rho_x\sigma_x\sqrt{\sigma_s}SV_{sx} \\
    & \quad - \rho_s\eta\sigma_sSV_{s\sigma_s} - \rho_r\sigma_r\sqrt{\sigma_s}SV_{sr} - \rho_x\rho_s\sigma_x\eta\sqrt{\sigma_s}V_{x\sigma_s} - \rho_x\rho_r\sigma_x\sigma_rV_{xr} - \rho_s\rho_r\eta\sigma_r\sqrt{\sigma_s}V_{\sigma_sr}, 
\end{split}
\end{align*}
which is the PDE \ref{MainPDE}.
\end{proof}

When we add equation \ref{Asianeq} to the system of SDEs \ref{mainmodel}, we get a PDE for the Asian option, as summarized in our second theorem. The proof is similar to the proof for Theorem \ref{ourPDE} and is omitted. 
\begin{thm}
Let $V$ be price of a European style derivative on a stock price process determined by system \ref{mainmodel}, and let $I$ be the process determined by equation \ref{Asianeq}. The function $V$ must be a solution to 
\begin{align} \begin{split}  \label{AsianPDE}
    \quad \ V_t &= R(V-SV_s-XV_x-\sigma_sV_{\sigma_s}-RV_r)
    -  V_I S - \frac{1}{2}\sigma_sS^2 V_{ss} - \frac{1}{2}\sigma_x^2 V_{xx} - \frac{1}{2}\eta^2\sigma_s V_{\sigma_s\sigma_s}- \frac{1}{2}\sigma_r^2V_{rr}\\
    &\quad - \rho_x\sigma_x\sqrt{\sigma_s}SV_{sx} - \rho_s\eta\sigma_sSV_{s\sigma_s} - \rho_r\sigma_r\sqrt{\sigma_s}SV_{sr} \\&\quad - \rho_x\rho_s\sigma_x\eta\sqrt{\sigma_s}V_{x\sigma_s} - \rho_x\rho_r\sigma_x\sigma_rV_{xr} - \rho_s\rho_r\eta\sigma_r\sqrt{\sigma_s}V_{\sigma_sr}
\end{split}
\end{align}
where $V_I$ is the first-order derivative of $V$ with respect to $I$.
\end{thm}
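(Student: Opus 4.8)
The plan is to mirror the derivation of Theorem \ref{ourPDE} verbatim, treating the running-average process $I$ as one additional state variable and exploiting the fact that its dynamics $dI = S\,dt$ carry no stochastic (Brownian) term. I would favor the replicating-portfolio approach, since it makes the role of the new variable most transparent: one checks that adding a variable whose evolution is purely deterministic-in-$S$ contributes only a first-order term to the PDE and no new second-order or cross terms.

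First I would enlarge the portfolio and the option value to depend on the five state variables $(S, X, \sigma_s, R, I)$, writing $V = V(t, S, X, \sigma_s, R, I)$ and applying Ito's lemma as in \eqref{dV}. The key observation is that since $dI = S\,dt$ has no Brownian component, every quadratic-variation term involving $I$ vanishes: $dI\,dI = 0$ and $dI\,dS = dI\,dX = dI\,d\sigma_s = dI\,dR = 0$. Hence Ito's expansion for $dV$ acquires exactly one new term, namely $V_I\,dI = V_I S\,dt$, relative to \eqref{dV}, and no new second-order contributions. Second, because $I$ is not a traded asset (it has no independent source of randomness, being a deterministic functional of the stock path), the hedging argument is unchanged: matching the $dW_1,\dots,dW_4$ coefficients still forces $\Delta_s = V_s$, $\Delta_x = V_x$, $\Delta_{\sigma_s} = V_{\sigma_s}$, $\Delta_r = V_r$, and the self-financing portfolio equation \eqref{dP2} is identical. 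Third, equating $dP = dV$ and collecting the $dt$ terms reproduces every term of \eqref{MainPDE} plus the single extra summand $V_I S$ on the right-hand side, which upon moving to the right with the correct sign yields the $-V_I S$ term appearing in \eqref{AsianPDE}.

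I do not anticipate a genuine obstacle here, which is precisely why the authors omit the proof; the only point requiring care is the bookkeeping claim that $I$ introduces no second-order or mixed terms. The substantive justification is that $I$ has finite variation (it is an ordinary time-integral of a continuous process), so its quadratic variation and all cross-variations with the Brownian-driven processes are zero. Once that is granted, the algebra is a line-for-line copy of the Theorem \ref{ourPDE} computation with one appended first-order term, and the risk-neutral derivation goes through identically after noting that $d(DI) = D\,dI + I\,dD = (DS - RDI)\,dt$ is purely a drift, so the martingale conditions for $DS, DX, D\sigma_s, DR$ under $\widetilde{\mathbb{P}}$ are untouched.
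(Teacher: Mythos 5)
Your proposal is correct and is exactly the argument the paper intends: the authors omit the proof, stating only that it is ``similar to the proof for Theorem \ref{ourPDE},'' and your observation that $I$ has finite variation (so $dI\,dI$ and all cross-variations vanish, contributing only the single first-order term $V_I S\,dt$ to Ito's expansion while leaving the hedging conditions untouched) is precisely the bookkeeping that makes that similarity rigorous. The sign of the resulting $-V_I S$ term also matches \eqref{AsianPDE}.
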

 In order to price a particular option, these PDEs must be pared with appropriate terminal values (data given at  time $T>0$). Under mild assumptions on the terminal values, it is known that the classical solutions to the corresponding Cauchy problems exist, are unique, and are smooth for all $t<T$. These results can be proved using the method of sub and super solutions, see for instance Lieberman \cite{lieberman_second_2005} for a detailed description of these techniques.

\section{Numerical results} \label{numerical}
In order to approximate the value of particular options, we choose to estimate the solutions of the above PDEs numerically using finite difference methods. Finite difference methods are efficient for parabolic equations, as the solution remains smooth. However, some numerical algorithms will require the size of the time step satisfy a Courant–Friedrichs–Lewy (CFL) condition \cite{courant_partial_2018}, which may require a large number of time steps. In order to circumvent this, we will also consider implicit algorithms which do not require the size of the time step satisify a CFL condition.  

We will sometimes use ``big-O'' notation, which we define now.
Let $f$, the function to be estimated, be a real or complex valued function and let $g$, the comparison function, be a real valued function. Let both functions be defined on some unbounded subset of the positive real numbers, and $g(x)$ be strictly positive for all large enough values of $x$. One writes
$$f(x) = O(g(x))\quad \text{as } x \rightarrow \infty$$
if the absolute value of $f(x)$ is at most a positive constant multiple of $g(x)$ for all sufficiently large values of $x$.  
\subsection{Finite difference methods}
\quad We construct a four-dimensional array that spans the interval $$[0, S_{\max}] \times [0, \sigma_{s\max}] \times [-X_{\max}, X_{\max}] \times [-R_{\max}, R_{\max}].$$ 
%which is expanded beyond the ranges of interest to eliminate any boundary impact. 

Let $\Delta t$, $\Delta S$, $\Delta \sigma_s$, and $\Delta X$ be the change in time, the change in stock price, the change in variance of the stock, the change in change in equity premium, and the change in interest rate. Let $V_{i, j, m, n}^h$ be the price of options at time $h$ with $i, j, m,$ and $n$ are the indexes stock price, variance, equity premium, and interest rate. The time index $h$ refers to time $T-h\Delta t$. We will go backwards in time to find the solution and at each time $T-h\Delta t$, the goal is to find the value of the option at time $T-h\Delta t - \Delta t$. 
%Since the option price at the terminal time $T$ is known and the goal is to find the option price at time $t = 0$, the time index $h$ refers to time $T-h\Delta t$. 
\subsubsection{Forward Euler method} 
\quad In the forward Euler method, the derivative at the current time step is estimated using the information from the current time step itself. The update formula is: $V^{h+1} = V^h + \Delta t  f(V^h, t)$, where $\Delta t$ is the time step size, $V^h$ is the solution at time $t$, and $f(V^h, t)$ is the derivative of the function at time t.\\
%For ease of notation, let $x_1 = S$, $x_2 = \sigma_s$, $x_3 = X$ and $x_4 = R$. Using central finite difference to estimate spatial derivatives, we have
% \begin{center}
% \begin{tabular}{ |c|c| } 
% \hline
% First-order derivative estimation  & Second-order derivative estimation \\
% \hline
% $\frac{\partial V}{\partial S} \approx \frac{V^{h}_{i+1,j,m,n}-V^{h}_{i-1, j,m,n}}{2\Delta S}$ & $\frac{\partial^2 V}{\partial S^2} \approx \frac{V^{h}_{i+1,j,m,n}-2V^{h}_{i,j,m,n}+V^{h}_{i-1,j,m,n}}{\Delta S^2}$ \\
% cell5 & cell6 \\ 
% cell8 & cell9 \\ 
% \hline
% \end{tabular}
% \end{center} 

\begin{itemize}
    \item First-order time derivative estimation
        \begin{align*}
            \frac{\partial V}{\partial t} \approx \frac{V^{h+1}_{i,j, m, n}-V^{h}_{i, j, m, n}}{\Delta t}
        \end{align*}
    \item First-order single-variable spatial derivative estimation: $$\frac{\partial U}{\partial S} \approx \frac{U^{t}_{i+1, j,m,n}-U^{t}_{i-1,j,m,n}}{2\Delta S},$$ and similar for $\cfrac{\partial U}{\partial \sigma_s}, \cfrac{\partial U}{\partial X},$ and $\cfrac{\partial U}{\partial R}.$ 
    \item Second-order single-variable spatial derivative estimation: $$\frac{\partial^2 U}{\partial S^2} \approx \frac{U^{t}_{i+1,j,m,n}-2U^{t}_{i, j,m,n}+U^{t}_{i-1, j,m,n}}{\Delta S^2},$$ and similar for $\cfrac{\partial^2 U}{\partial \sigma_s^2}, \cfrac{\partial^2 U}{\partial X^2},$ and $\cfrac{\partial^2 U}{\partial R^2}.$ 
    \item Second-order mixed-variable spatial derivative estimation
    $$\frac{\partial^2 V}{\partial S \partial \sigma_s} \approx \frac{V^{h}_{i+1,j+1,m,n}+V^{h}_{i-1,j-1,m,n}-V^{h}_{i+1,j-1,m,n}-V^{h}_{i-1,j+1,m,n}}{4\Delta S \Delta \sigma_s},$$
    and similar for $\cfrac{\partial^2 V}{\partial S \partial X},$ $\cfrac{\partial^2 V}{\partial S \partial R},$ $\cfrac{\partial^2 V}{\partial S \partial R},$ $\cfrac{\partial^2 V}{\partial \sigma_s \partial X},$ $\cfrac{\partial^2 V}{\partial \sigma_s \partial R},$ and $\cfrac{\partial^2 V}{\partial X \partial R}$.
        % \begin{align*}
        %     \frac{\partial^2 V}{\partial S \partial \sigma_s} &\approx \frac{V^{h}_{i+1,j+1,m,n}+V^{h}_{i-1,j-1,m,n}-V^{h}_{i+1,j-1,m,n}-V^{h}_{i-1,j+1,m,n}}{4\Delta S \Delta \sigma_s} \\
            
        %     \frac{\partial^2 V}{\partial S \partial X} &\approx \frac{V^{h}_{i+1,j,m+1,n}+V^{h}_{i-1,j,m-1,n}-V^{h}_{i+1,j,m-1,n}-V^{h}_{i-1,j,m+1,n}}{4\Delta S \Delta X} \\
            
        %     \frac{\partial^2 V}{\partial S \partial R} &\approx \frac{V^{h}_{i+1,j,m,n+1}+V^{h}_{i-1,j,m,n-1}-V^{h}_{i+1,j,m,n-1}-V^{h}_{i-1,j,m,n+1}}{4\Delta S \Delta R} \\
            
        %     \frac{\partial^2 V}{\partial \sigma_s \partial X} &\approx \frac{V^{h}_{i,j+1,m+1,n}+V^{h}_{i,j-1,m-1,n}-V^{h}_{i,j+1,m-1,n}-V^{h}_{i,j-1,m+1,n}}{4\Delta \sigma_s \Delta X} \\
            
        %     \frac{\partial^2 V}{\partial \sigma_s \partial R} &\approx \frac{V^{h}_{i,j+1,m,n+1}+V^{h}_{i,j-1,m,n-1}-V^{h}_{i,j+1,m,n-1}-V^{h}_{i,j-1,m,n+1}}{4\Delta \sigma_s \Delta R} \\
        %     \frac{\partial^2 V}{\partial X \partial R} &\approx \frac{V^{h}_{i,j,m+1,n+1}+V^{h}_{i,j,m-1,n-1}-V^{h}_{i,j,m+1,n-1}-V^{h}_{i,j,m-1,n+1}}{4\Delta X \Delta R}
        % \end{align*}
\end{itemize}
\quad The forward Euler method formula is obtained by substituting the derivative estimations mentioned above into the PDE \ref{MainPDE}
$$V^{h+1}_{i, j,m,n} - V^{h}_{i, j,m,n} = \sum_{a, b, c, d\in \{-1,0,1\}} C_{a,b,c, d}V_{i+a,j+b,m+c, n+d}^t$$
for some coefficient $C_{a,b,c, d}$ from the PDE. Let $M$ be the matrix transformation such that $$M V^h = \sum_{a, b, c, d\in \{-1,0,1\}} C_{a,b,c, d}V_{i+a,j+b,m+c, n+d}^t.$$ 
Then, we get
\begin{equation} \label{eq1}
V^{h+1} - V^h = MV^h
\end{equation}
\subsubsection{Backward Euler method} 
\quad In the backward Euler method, the derivative at the next time step is estimated using the information from the next time step itself. The update formula is: $V^{h+1} = V^h + \Delta t  f(V^{h+1}, t)$. Similar to forward Euler method, we get:
\begin{equation} \label{eq2}
V^{h+1} - V^h = MV^{h+1}
\end{equation}
\subsubsection{Crank-Nicolson method} 
At each time step, the derivative terms in the PDE are approximated using a combination of values from the current time step and the next time step. 
The Crank-Nicolson method combines the forward Euler method in \ref{eq1} and backward Euler method in \ref{eq2} with a weighting parameter $\theta$ which is often set to be $0.5$:
\begin{equation} \label{cneq}
V^{h+1} - V^h = (1-\theta)MV^h + \theta MV^{h+1}
\end{equation}

Due to the difficulty of dealing with mixed derivative terms in an implicit method, we only use those terms explicitly. Let $M_{s\sigma_s}$, $M_{sx}$, $M_{sr}$, $M_{\sigma_sx}$, $M_{\sigma_sr}$, and $M_{x r}$ be the coefficient matrices of second-order mixed-variable spatial derivatives. Let $M_{ss}$, $M_{\sigma_s\sigma_s}$, $M_{\sigma_s r}$, and $M_{x x}$ be the coefficient matrices of both first-order and second-order single-variable spatial derivatives. The term $rV$ is split distributed evenly over $M_{ss}$, $M_{\sigma_s\sigma_s}$, $M_{\sigma_s r}$, and $M_{x x}$ as in Haentjens \& in’t Hout (2012) \cite{haentjens_adi_2011}. Similar to Lin \& Reisinger (2008) \cite{sensen_lin_christoph_reisinger_finite_2008}, let $$A = M_{s\sigma_s}+M_{sx}+M_{sr}+M_{\sigma_sx}+M_{\sigma_sr}+M_{x r},$$ and $$B = M_{ss}+M_{vv}+M_{xx}+M_{rr}.$$
Then, equation \ref{cneq} is equivalent to
        $$V^{h+1}-V^h = ((1-\theta)B + A)V^h + \theta B V^{h+1}  \quad  \iff \quad (I - \theta B)(V^{h+1}-V^h) = (A +B)V^h .$$
Pluging in $B = M_{ss}+M_{vv}+M_{xx}+M_{rr}$, we get 
$$(I - \theta (M_{ss}+M_{vv}+M_{xx}+M_{rr}))(V^{h+1}-V^h) = (A+B)V^h.$$ 
\vspace{0.5cm}
\noindent Since the terms $V^{h+1}-V^h$, $M_{ss}$, $M_{vv}, M_{xx}$, and $M_{rr}$ are $O(\Delta t)$, we have
\begin{align*}
(I-\theta M_{ss})(I-\theta M_{vv})(I-\theta M_{xx})(I-\theta M_{rr})(V^{h+1}-V^h)
%&= (I - \theta (M_{ss}+M_{vv}+M_{xx}+M_{rr}) + \theta^2(M_{ss}M_{vv}+M_{ss}M_{xx}+M_{ss}M_{rr}+M_{vv}M_{xx}+M_{vv}M_{rr}+M_{xx}M_{rr}) \\
%&\quad - \theta^3(M_{ss}M_{vv}M_{xx}+M_{ss}M_{vv}M_{rr}+M_{ss}M_{xx}M_{rr}+M_{vv}M_{xx}M_{rr}) + \theta^4M_{ss}M_{vv}M_{xx}M_{rr})(V^{h+1}-V^h) \\
%&= (I - \theta (M_{ss}+M_{vv}+M_{xx}+M_{rr}))(V^{h+1}-V^h) + O(\Delta t^3) \\
= (A+B)V^h + O(\Delta t^3).
\end{align*}
Using the estimation  
$$(I-\theta M_{ss})(I-\theta M_{vv})(I-\theta M_{xx})(I-\theta M_{rr})(V^{h+1}-V^h) = (A+B)V^h + O(\Delta t^3),$$
we can compute $V^{h+1}$ using a splitting algorithm as follows
\begin{align*}
(I-\theta M_{ss})Y_1 &= (A + B)V^h \\
(I-\theta M_{vv})Y_2 &= Y_1 \\
(I-\theta M_{xx})Y_3 &= Y_2 \\
(I-\theta M_{rr})Y_4 &= Y_3 \\
V^{h+1} &= V^h + Y_4.
\end{align*}
When $\theta = 0$ and $\theta = 1$, this algorithm gives the solution for forward Euler method and backward Euler method respectively.
\subsection{European call option}
\subsubsection{Terminal and boundary conditions}
\quad Let $K$ be the strike price of the European call option. At the terminal time $T$, $V(s,\sigma_s, x, r, T) = \max\{s-K, 0\}.$ \\
The following boundary conditions are imposed for all $0 \leq t \leq T$:
\begin{subequations}
\begin{align}
    V &= 0 &&\quad  \text{when } s = 0, \label{s=0}\\
    V_s &= 1 &&\quad  \text{when } s \rightarrow \infty, \label{smax}\\
    rV &= V_t +  rs V_s + rx V_x + r^2V_r + \frac{1}{2} (V_{rr}{\sigma_r}^2 +V_{xx}{\sigma_x}^2) + V_{vr} \sigma_x \sigma_r \rho_r \rho_x &&\quad \text{when } \sigma_s = 0,\label{v=0} \\
    V &= s &&\quad  \text{when } \sigma_s\rightarrow \infty, \label{vmax}\\
    V &= 0 &&\quad  \text{when } x \rightarrow -\infty, \label{xmin}\\
    V_s &= 1 &&\quad  \text{when } x \rightarrow \infty, \label{xmax}\\
    V_r &= 0  &&\quad  \text{when } r = \pm R_{\max}. \label{rminmax}
\end{align}
\end{subequations}
\quad Condition \ref{s=0} is obvious, since when the underlying asset value is $0,$ the call option for this asset is worthless. Conditions \ref{smax} and \ref{vmax} are stated in Heston (1993) \cite{heston_closed-form_1993}. When $\sigma_s = 0$, we plug in $\sigma_s = 0$ into the PDE \ref{MainPDE} and obtain the condition \ref{v=0}. When $x \rightarrow -\infty$, the underlying asset price approaches 0, so the option price is 0 as in condition \ref{xmin}. When $x \rightarrow \infty$, the underlying asset price gets very large and also approaches infinity, so we get condition \ref{xmax} similar to condition \ref{smax}.
%Similar to a forward contract, $$\lim_{x\rightarrow\infty} V = S - K e^{-r(T-t)} = O(s),$$ so 
%$$\lim_{x\rightarrow\infty} V_s = 1$$as in condition \ref{xmax}.
Condition \ref{rminmax} is used in Haentjens \& in’t Hout (2012) \cite{haentjens_adi_2011}.
\subsubsection{Numerical experiments} \label{call}
\begin{enumerate}[label=(\textbf{\alph*})]
    \item \textbf{Setting up parameters}\\
    Let $K = 5, \rho_s = 0.18, \rho_x = 0.23, \rho_r = 0.21, \eta = 0.027, \sigma_x = 0.011,$ and $\sigma_r = 0.019$. The graphs resulting from the given parameters and the Crank-Nicolson method are displayed below. Given our limitation to visualizing up to three dimensions, and considering that the option value at terminal time t=0 depends on four variables, we need to hold two variables constant while visualizing the option value based on the remaining two variables.
    \begin{figure}[!htb]
        \centering 
        \subfigure[]{\includegraphics[width=0.3\textwidth]{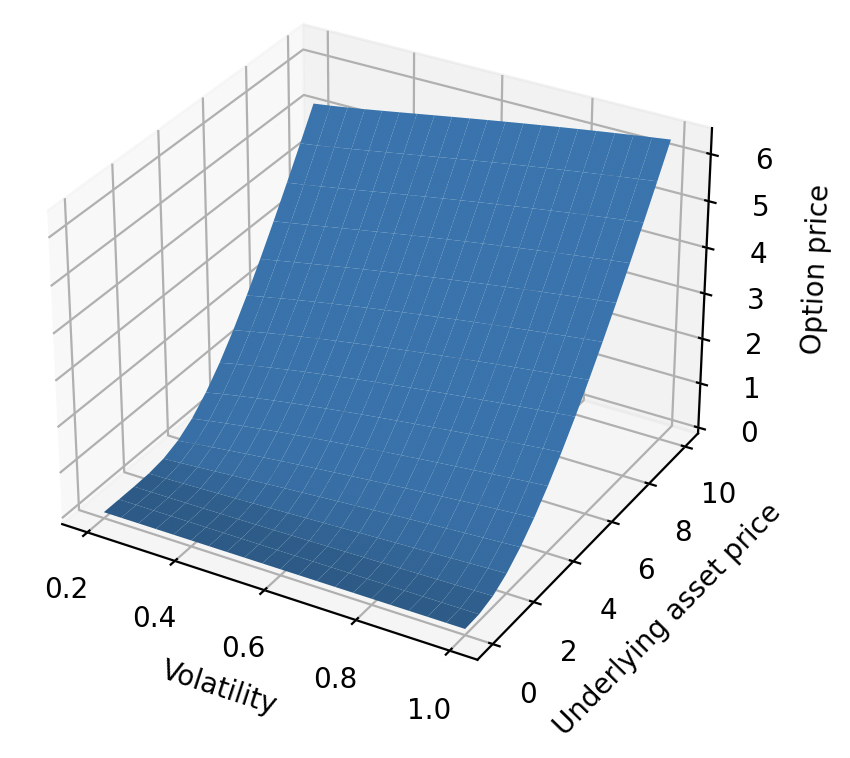}}
        \subfigure[]{\includegraphics[width=0.3\textwidth]{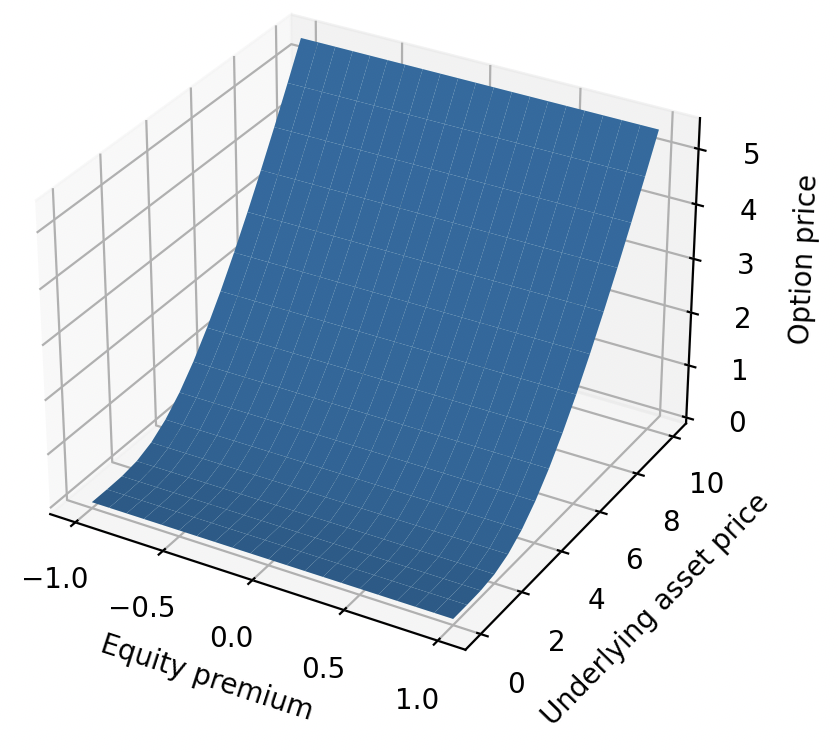}} 
        \subfigure[]{\includegraphics[width=0.3\textwidth]{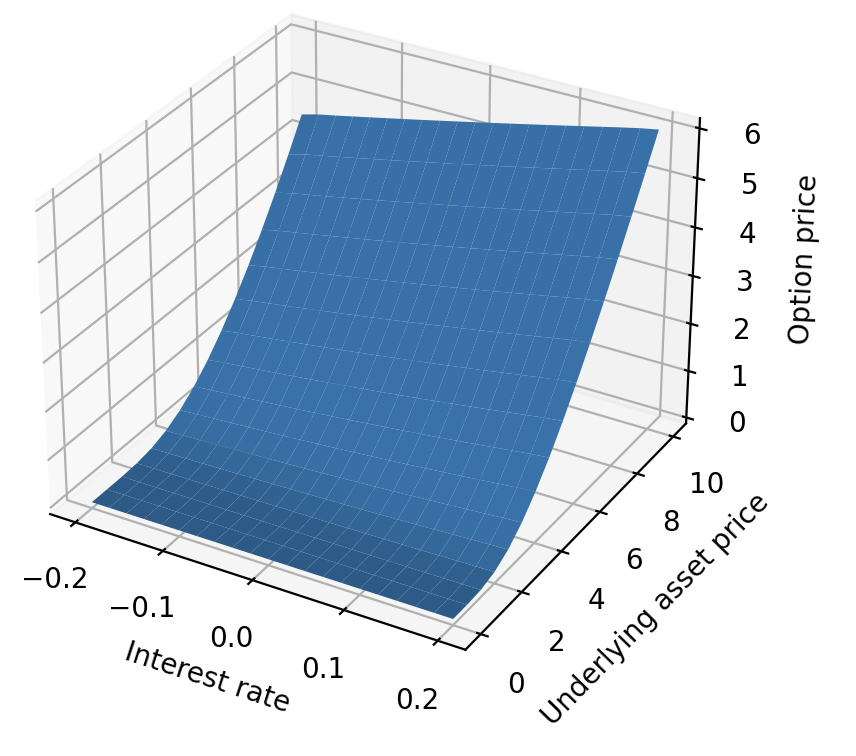}} 
        \caption{European option price plot of (a) $S\sigma_s$ slice when $X = 0.5$ and $R = 0.04$, (b) SX slice when $\sigma_s = 0.36$ and $R=0.4$, (c) SR slice when $\sigma_s = 0.36$ and $X = 0.5$}
    \end{figure}
    \FloatBarrier
    \item \textbf{Numerical comparison} \\
    We perform numerical experiments for three methods: forward Euler, backward Euler, and Crank-Nicolson. Through experimentation, we have determined that both backward Euler and Crank-Nicolson consistently yield stable results with 10 time steps. The forward Euler method needs 220 time steps for stability to be achieved. \\
    For comparison, each method will be executed using 220 time steps, and the outcomes for a given stock price, volatility, change in equity premium, and interest rate from each method is recorded in the table below. The table shows that there is a marginal discrepancy among the results obtained through the three methods. However, the differences remain relatively small. 
    \begin{table}[!htb] \label{comparetable}
        \centering
        \begin{tabular}{|c|c|c|c|c|c|c|}
        \cline{1-7}
        S  & V  & X & R & Forward Euler &  Backward Euler &Crank-Nicolson\\
        \hline
        $8$   & $0.28$   & $0.1$ & $0.02$ & $3.5171$ & $3.5127$ & $3.5149$  \\
        \hline
        $3.3$   & $0.4$   & $-0.3$ & $-0.16$ &$0.2743$ & $0.2738$ & $0.2741$  \\
        \hline
        $7.3$   & $0.8$   & $-0.6$ & $0.06$ &$3.2852$ & $3.2602$ & $3.2724$ \\
        \hline
        $6$   & $0.16$   & $-0.2$ & $0.1$ & $5.5657$ &$5.5266$ & $5.5458$ \\
        \hline
        \end{tabular}
        \caption{Results comparison among three methods}
        \label{time}
    \end{table}
     \FloatBarrier
    \item \textbf{Convergence} \\
    All three suggested approaches yield convergent solutions as the number of time steps increases. This implies that the finite difference approximation of the PDE approaches its actual solution. For each method, we increase the number of time steps between the initial time $t=0$ and the terminal time $t=T$, and then plot the option value against specified parameters including stock price, volatility, change in equity premium, and interest rate. The graph presented below illustrates the European call option prices at time $t=0$, computed using various numbers of time steps and different numerical methods, assuming $S = 8.5, \sigma_s = 0.28, X = 0,$ and $R = 0.02$. As the number of time steps increases, a convergence of option values is evident across all methods. In backward Euler, the difference in option price is much smaller than other two methods as the number of time steps increases and the values still signifies convergence.

    \begin{figure}[!htb] \label{convergence}
        \centering 
        \subfigure[]{\includegraphics[width=0.3\textwidth]{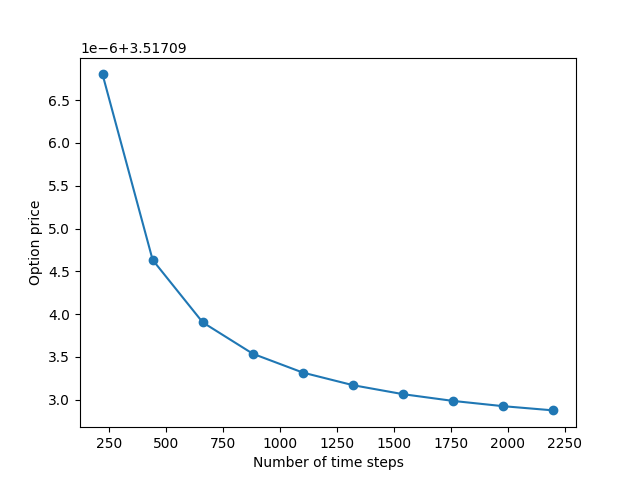}} 
        \subfigure[]{\includegraphics[width=0.3\textwidth]{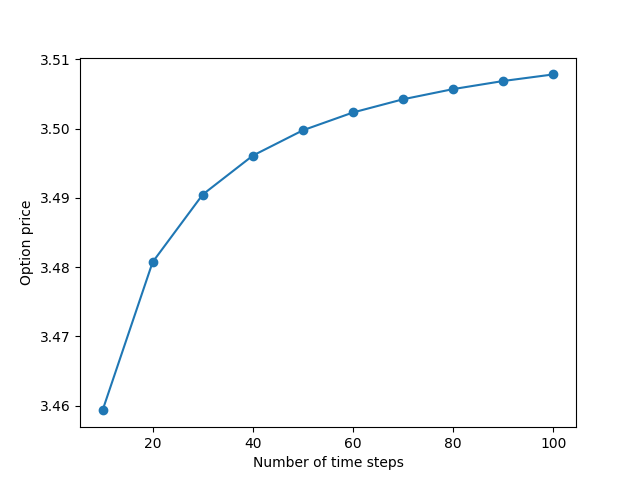}} 
        \subfigure[]{\includegraphics[width=0.3\textwidth]{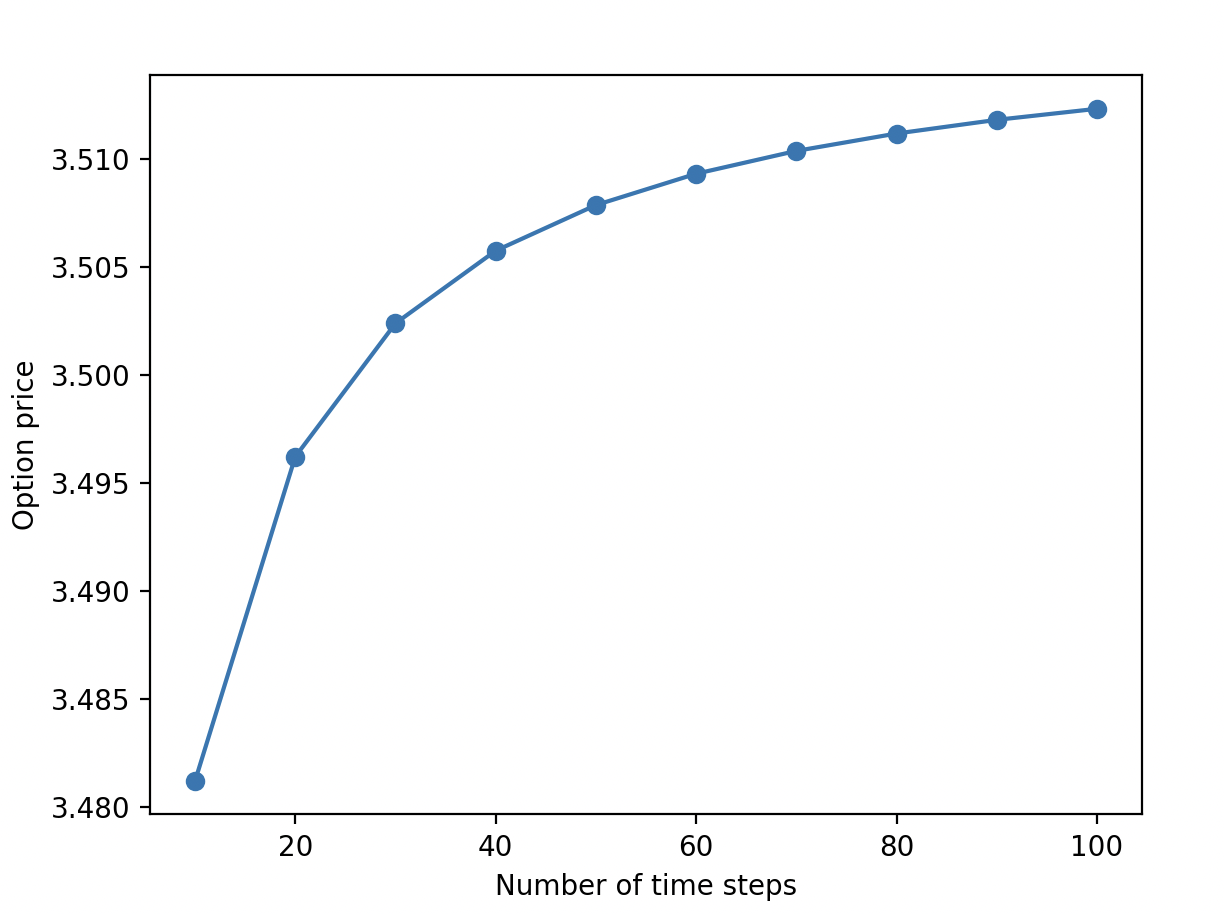}}
        \caption{Convergence of (a) Forward Euler method (b) Backward Euler method (c) Crank-Nicolson method}
    \end{figure}
    \FloatBarrier
\end{enumerate}
\subsection{European up-and-out call option}
\subsubsection{Terminal and boundary conditions}
\quad Let $B$ be the barrier price of the European up-and-out call option. The conditions \ref{s=0}, \ref{smax}, \ref{v=0}, \ref{xmin}, \ref{xmax} and \ref{rminmax} from European call option also hold for European up-and-out call option. The condition \ref{vmax} is replaced by 
$$V_{\sigma_s} = 0,$$
which is used in Haentjens \& in’t Hout (2012) \cite{haentjens_adi_2011}. By the definition of this type of option, 
$$V = 0 \quad \text{when } s \geq B, \forall 0 \leq t \leq T.$$ 
\subsubsection{Numerical experiments}
\quad Using the same parameters for European call option in \ref{call} and Crank-Nicolson method, we get graphs as displayed below for barrier $B = 8$. 
 \begin{figure}[!htb]
        \centering 
        \subfigure[]{\includegraphics[width=0.3\textwidth]{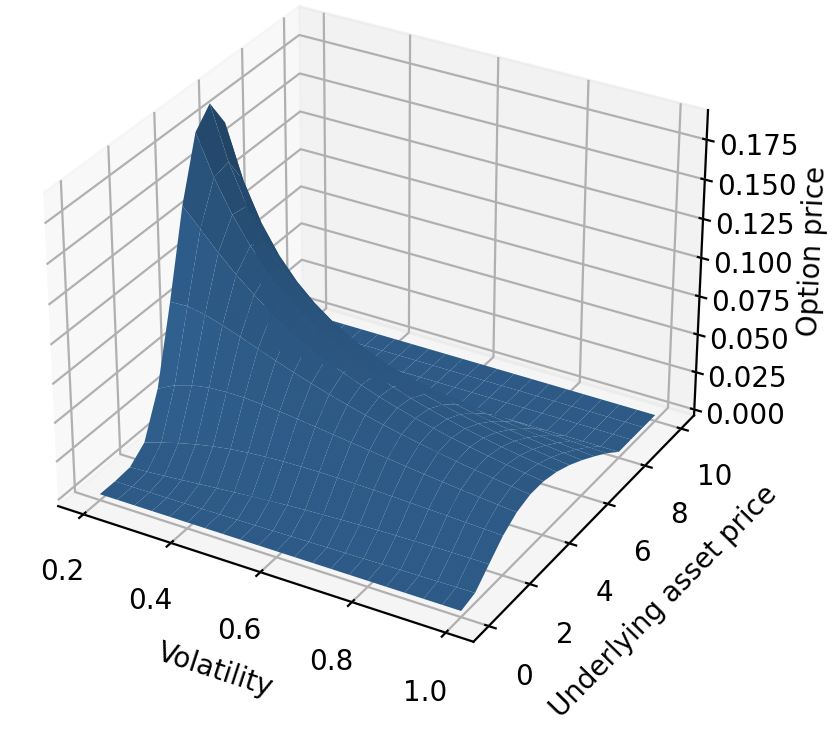}}
        \subfigure[]{\includegraphics[width=0.3\textwidth]{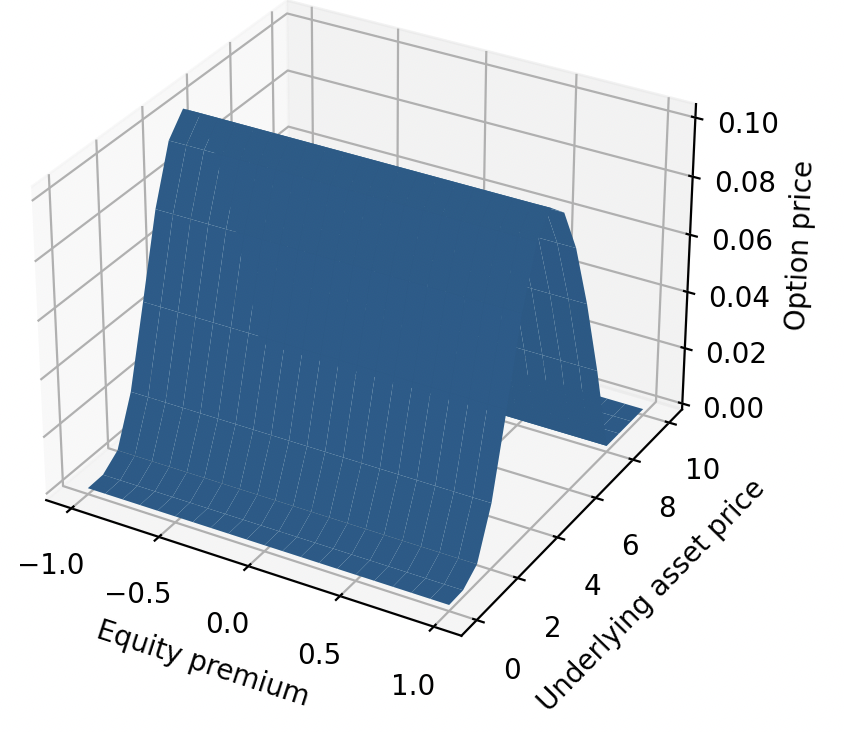}} 
        \subfigure[]{\includegraphics[width=0.3\textwidth]{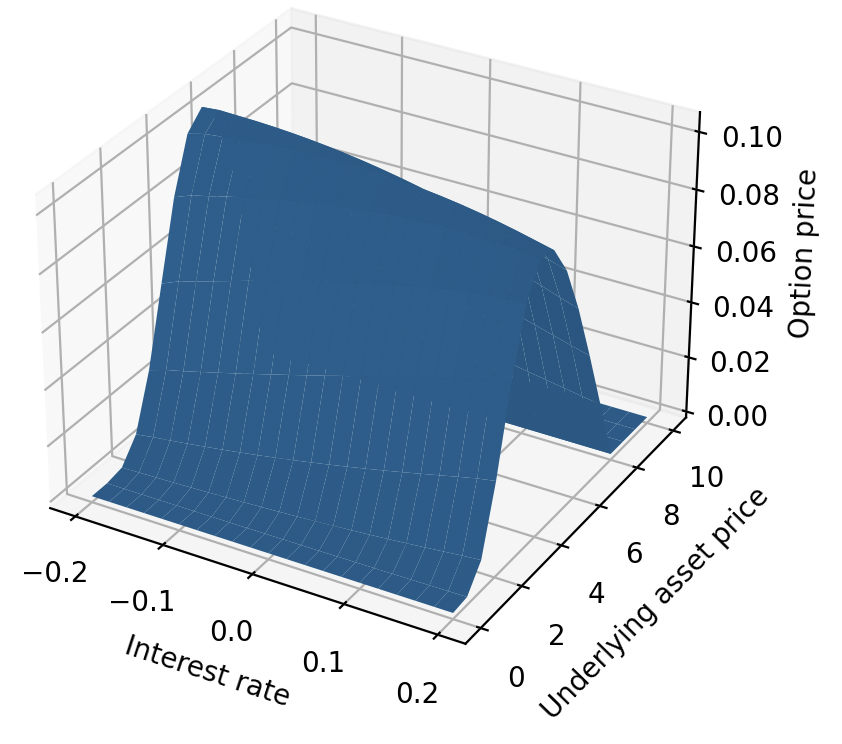}} 
        \caption{European up-and-out call option price plot of (a) $S\sigma_s$ slice when $X = 0.5$ and $R = 0.04$, (b) SX slice when $\sigma_s = 0.36$ and $R=0.4$, (c) SR slice when $\sigma_s = 0.36$ and $X = 0.5$}
    \end{figure}
    \FloatBarrier

\section{Acknowledgments}
This research project was done as part of the 2023 Ohio State ROMUS (Research Opportunities in Mathematics for Underrepresented Students) program. We thank Professor John Holmes for his guidance. We also thank the support from the NSF LEAPS DMS grant number 2247019. The third author Luong-Le would like to thank Christian Altamirano for helpful discussions on the numerical estimation for the Heston model.

\bibliographystyle{plain}

\bibliography{ROMUS.bib}
%\bibliography{} % add bibtex file here
% Lieberman, G. M. (2005): Second order parabolic differential equations, Hackensack, NJ.:
% World Scientific
\end{document}